\renewcommand\footnotetextcopyrightpermission[1]{} 
	\providecommand\BibTeX{{%
			\normalfont B\kern-0.5em{\scshape i\kern-0.25em b}\kern-0.8em\TeX}}}
\newif\if@restonecol  
	\newcommand{\quanState}[1][]{|#1\rangle}
	\newcommand{\conjTran}[1][]{{#1}^{\dagger}}
	\newcommand{\inverse}[1][]{{#1}^{-1}}
	\newcommand{\varSet}{\mathbf{X}}
	\newcommand{\varSetX}{\mathbf{X}}
	\newcommand{\varSetY}{\mathbf{Y}}
	\newcommand{\order}{\pi}
	\newcommand{\size}[1][]{|#1|}
	\newcommand{\set}[1]{\{ #1 \}}
	\newcommand{\ind}[1][]{i(#1)}
	\newcommand{\ie}{{\it i.e.}}
	\newcommand{\bool}{\mathbf{B}}
\begin{document}
		
		\title{Quantum Multiple-Valued Decision Diagrams with \\ Linear Transformations}
		
		
	
	
	\author{Yonghong Li}
	\affiliation{%
		 \institution{College of Information Science and Technology}
		 \city{Jinan University}
		 \country{China}}
	\email{lyh2020@stu2020.jnu.edu.cn}

	\author{Miao Hao}
	\affiliation{%
		 \institution{College of Information Science and Technology}
		\city{Jinan University}
		\country{China}}
	\email{miaohao77@stu2020.jnu.edu.cn}
	
	
	\begin{abstract}
		Due to the rapid development of quantum computing, the compact representation of quantum operations based on decision diagrams has been received more and more attraction.
		Since variable orders have a significant impact on the size of the decision diagram, identifying a good variable order is of paramount importance.
		In this paper, we integrate linear transformations into an efficient and canonical form of quantum computing: Quantum Multiple-Valued Decision Diagrams (QMDDs) and develop a novel canonical representation, namely linearly transformed QMDDs (LTQMDDs). 
		We design a linear sifting algorithm for LTQMDDs that search a good linear transformation to obtain a more compact form of quantum function.
		Experimental results show that the linear sifting algorithm is able to generate decision diagrams that are significantly improved compared with the original sifting algorithm.
		Moreover, for certain types of circuits, linear sifting algorithm have good performance whereas sifting algorithm does not decrease the size of QMDDs.

	\end{abstract}
	
	\begin{CCSXML}
		<ccs2012>
		<concept>
		<concept_id>10010583.10010682.10010690.10010691</concept_id>
		<concept_desc>Hardware~Combinational synthesis</concept_desc>
		<concept_significance>500</concept_significance>
		</concept>
		<concept>
		<concept_id>10003752.10003753.10003758.10010624</concept_id>
		<concept_desc>Theory of computation~Quantum communication complexity</concept_desc>
		<concept_significance>500</concept_significance>
		</concept>
		</ccs2012>
	\end{CCSXML}
	
	\ccsdesc[500]{Hardware~Combinational synthesis}
	\ccsdesc[500]{Theory of computation~Quantum communication complexity}

	\keywords{Quantum computation, Decision diagrams, Linear transformation, Minimization}
	
	
	\maketitle
	
	\section{Introduction}
	\looseness=-1
	Quantum computing is one of computation models that utilizes the properties of quantum mechanics to solve computational problems.
	Due to the parallelism of quantum computing, quantum computers are capable of solving some specific computation problems (\ie, integer factorization \cite{Shor1999}, database search \cite{Gro1996}, computational biology \cite{FedG2021} and quantum chemistry \cite{ReiM2017,ArgJ2019}) substantially faster than classical computers.
	A quantum state over $n$ qubits is formalized by a normalized vector of size $2^n$.
	The state over $n$ qubits can be transformed via a quantum operation represented as a unitary matrix of size $2^n \times 2^n$.
	As more qubits involve, the size of the normalized vector and unitary matrix for quantum computing grows exponentially.
	
	
	
	
	\looseness=-1
	To mitigate inefficiency in representation, many different approaches are proposed, for example, based on arrays \cite{GueG2020, JonT2019, GheV2018}, tensor networks \cite{MarL2008, WanS2017}, and decision diagrams (DDs) \cite{WanLT2008, NieWM2016}.
	In this paper, we focus on the compact DD-based forms. 
	The shared insight behind DDs is to recursively decompose the unitary matrix into submatrices according to a variable order.
	The choice of variable order has a significant impact on the size of decision diagrams.
	Therefore, identifying a good variable order for DDs is of paramount importance.
	\citet{NieWM2016} propose a canonical DD-based form of quantum functionality, namely Quantum Multiple-Valued Decision Diagrams (QMDDs).
	They also developed a method to exchange adjacent variables in QMDDs, and integrated sifting algorithm \cite{Rud1993} into QMDDs that relies on the exchange method.
	Compared to the approach without variable reordering, sifting algorithm produces much more compact QMDDs.
	
	\looseness=-1
	From the mathematical point of view, each variable order is an automorphism on $n$ Boolean variables $\bool^n$, \ie, a bijection $\bool^n \to \bool^n$.
	Based on an automorphism, we can obtain a different matrix $M'$ via relocating the position of each entry of the original matrix $M$.
	The new matrix $M'$ may have a smaller QMDD-representation than the original one $M$.
	Linear transformation, which is a fully ranked order of linear combination of variables, is a more expressive representation of automorphisms than variable orders.
	It was confirmed that binary decision diagrams (BDDs \cite{Bry1986}) with linear transformations have smaller sizes than those with only variable orders from both perspectives of theory and practice \cite{MeiST2000,GunD2000,GunD2003}.

	Inspired by the concept of linear transformation, in this paper, we design linear sifting algorithm for QMDDs so as to acquire a more compact form of quantum functionality.
	To this end, we first give a new definition of linear transformation and show how linear transformations change unitary matrices.
	Then, we incorporate linear transformations into QMDDs and derive a new representation, namely linearly transformed QMDDs (LTQMDDs).	
	In fact, a LTQMDD denotes the changed unitary matrix based on the given linear transformation.
	Moreover, we devise three level exchange procedures for swapping the nodes of two adjacent levels in LTQMDDs and develop linear sifting algorithm that searches for a good linear transformation for LTQMDDs based on the exchange procedures.
	Finally, we implement linear sifting algorithm for LTQMDDs and compare linear sifting algorithm with original sifting algorithm.
	The empirical results show that linear sifting algorithm is able to generate LTQMDDs of $11\%$ smaller size than original sifting algorithm.
	%

	
	\looseness=-1
	The structure of this paper is organized as follows.
	Some essential concepts of quantum computing, QMDDs and linear transformations are briefly reviewed in Section 2. 
	Section 3 introduces the integration of linear transformations into quantum computing and QMDDs and illustrates a minimization algorithm for LTQMDDs.
	Experimental results are presented in Section 4 followed.
	Finally, Section 5 concludes this paper.
	
	\section{Preliminaries}	
	\looseness=-1
	This section includes the basic knowledge of quantum computing, quantum multiple-valued decision diagrams (QMDDs), and linear transformations.
	
	\subsection{Quantum Computing}
	\looseness=-1
	Throughout this paper, we fix a set of $n$ variables $\varSet: \set{x_0, x_1, \cdots, \\ x_{n - 1}}$ and $\bool = \set{0, 1}$.
	In quantum computing, the elementary unit of quantum information is \textit{quantum bits} (in short, \textit{qubits}). 
	$n$ qubits form an $n$-level quantum system, which is formalized by a $2^n$-dimensional Hilbert space over complex numbers.
	The orthonormal basis states consists of $2^n$ states of which is represented by $\quanState[v]$ where $v \in \bool^n$.
	A \textit{quantum state} of $n$ qubits $\quanState[\varSet]$ is a linear superposition of orthonormal basis states $\sum_{v \in \bool^n} (\alpha_v \cdot \quanState[v])$ where each $\alpha_v$ is the coefficient of $\quanState[v]$ and $\sum_{v \in \bool^n} |\alpha_v|^{2} = 1$.
	
	\looseness=-1
	A natural number $i$ can be represented in a $n$-dimensional boolean vector and vice versa.
	For example, the $4$-dimensional boolean vector of $5$ is $(0101)_2$.
	For ease of presentation, we use these two representations interchangeably.
	A \textit{quantum state} of $n$ qubits can be represented by a normalized vector of length $2^n$ where its $i$-th element\footnote{Throughout this paper, we use $0$-based indexing for vectors and matrices, \ie, the index of element of a vector starts from $0$.} is the coefficient of the basis state $\quanState[i]$.
	
		\begin{example}
				Suppose that we have 2 qubits and the set of variable $\varSet = \{x_{0}, x_{1}\} $. 
				The quantum state $\quanState[\varSet]$ is $\alpha_{0} \cdot \quanState[0] + \alpha_{1} \cdot \quanState[1] + \alpha_{2} \cdot \quanState[2] + \alpha_{3} \cdot \quanState[3]$.
				The vector of $\quanState[\varSet]$ is $[\alpha_{0} \ \alpha_{1} \ \alpha_{2} \ \alpha_{3}]^{\top}$.
			\end{example}
	
	A quantum state of $n$ qubits is transformed by a quantum operation, which is described as a unitary matrix of size $2^n \times 2^n$.
	A complex-valued matrix $U$ is \textit{unitary} iff its conjugate transpose $\conjTran[U]$ is its inverse $\inverse[U]$.
	
%
%
%

	\subsection{Quantum Multiple-Valued Decision Diagrams}
		\begin{figure*}[t]
	\subfigure[The matrix $U$ with the standard order {$[x_{0}, x_{1}]$} ]{
		\begin{minipage}[t]{0.35\linewidth}
			\centering
			$\begin{bNiceArray}{cc:cc}[first-row,last-row=5,first-col,last-col,nullify-dots]
				& 0 & 1 & 2 & 3 & \\
				0 & a & b & a & c & \\
				1 & d & c & d & c & \\
				\hdottedline
				2 & a & b & a & b & \\
				3 & a & c & d & b & \\
				&  &  & &  &
			\end{bNiceArray}$
		\end{minipage}%
	}%
	\subfigure[The matrix $U^{\order}$ with the order {$\order: [x_{1}, x_{0}]$} ]{
		\begin{minipage}[t]{0.325\linewidth}
			\centering
			$\begin{bNiceArray}{cc:cc}[first-row,last-row=5,first-col,last-col,nullify-dots]
				& 0 & 1 & 2 & 3 & \\
				0 & a & a & b & c & \\
				1 & a & a & b & b & \\
				\hdottedline
				2 & d & d & c & c & \\
				3 & a & d & c & b & \\
				&  &  & &  &
			\end{bNiceArray}$
		\end{minipage}%
	}%
	\subfigure[The matrix $ U^{\order'} $ with the linear transformation {$\order' : [x_{0}\oplus x_{1}, x_{1}]$} ]{
		\begin{minipage}[t]{0.3\linewidth}
			\centering
			$\begin{bNiceArray}{cc:cc}[first-row,last-row=5,first-col,last-col,nullify-dots]
				& 0 & 1 & 2 & 3 & \\
				0 & a & c & a & b & \\
				1 & a & b & d & c & \\
				\hdottedline
				2 & a & b & a & b & \\
				3 & d & c & d & c & \\
				&  &  & &  &
			\end{bNiceArray}$
		\end{minipage}%
	}%
\end{figure*}
\begin{figure*}[htbp]
	\subfigure[The QMDD for the matrix $U$]{
		\begin{minipage}[t]{0.35\linewidth}
			\centering
			\includegraphics[scale=0.4]{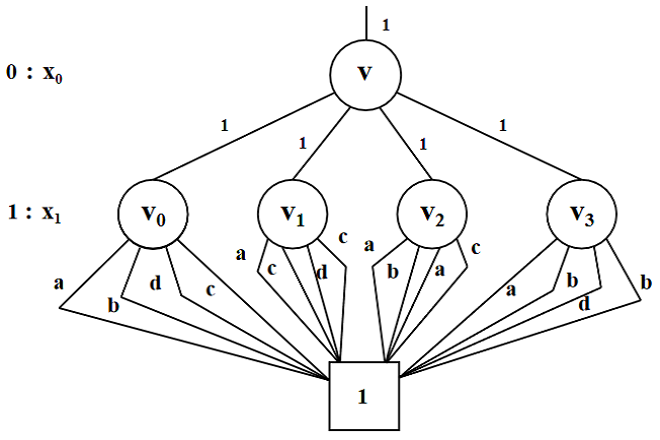}
		\end{minipage}%
	}%
	\subfigure[The QMDD for the matrix $U^{\order}$ ]{
		\begin{minipage}[t]{0.325\linewidth}
			\centering
			\includegraphics[scale=0.4]{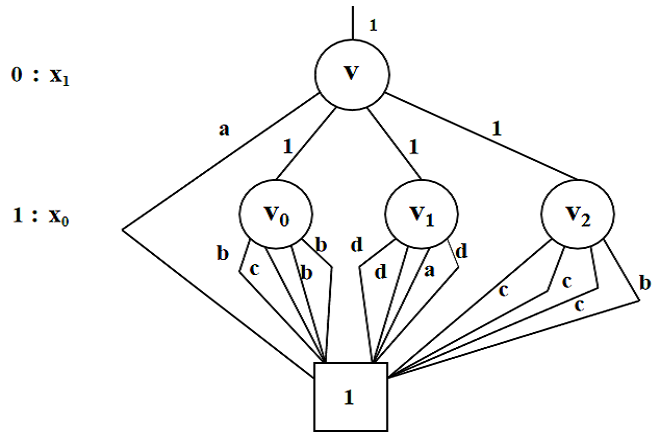}
		\end{minipage}%
	}%
	\subfigure[The LTQMDD for the matrix $U^{\order'}$]{
		\begin{minipage}[t]{0.3\linewidth}
			\centering
			\includegraphics[scale=0.4]{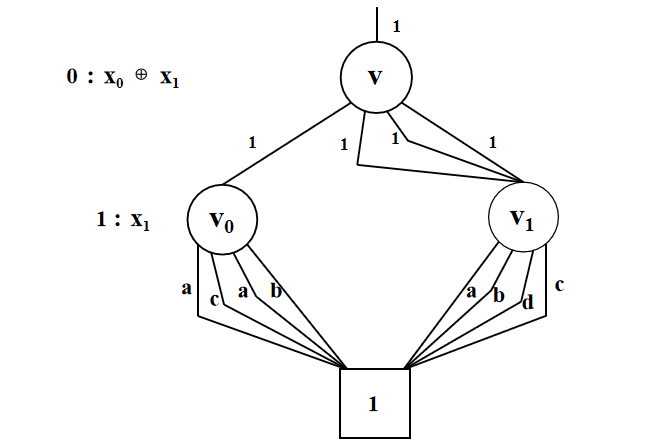}
		\end{minipage}%
	}%
	\caption{The matrices and QMDDs representing the same quantum operation with different linear transformation order}
	\label{fig:MatricesQMDD2}
\end{figure*}

\looseness=-1
Clearly, the size of unitary matrix is exponential in the number of qubits.
In practice, matrices are spare and contain many identical submatrices.
By making advantage of these characteristics of matrices in practical applications, representing unitary matrices in a compact form become feasible.
Recently, \citet{NieWM2016} proposed a compact and canonical representation of matrices, namely, quantum multiple-valued decision diagrams (QMDDs).
The basic idea of QMDDs is to iteratively decompose the matrix into four submatrices according to a variable.

\begin{definition}
	A \textit{quantum multiple-valued decision diagram \\
	(QMDD)} is a rooted directed acyclic graph with a root edge $(v_r, e_r)$ where $v_r$ is the root node and $e_r$ is an edge pointing to $v_r$.
	The nodes are classified into two types: \textit{internal} and \textit{terminal}.
	Each internal node $v$ is associated with an index $\ind[v]$ where $\ind[v] \in \set{0, \cdots, n-1}$.
	The node $v$ has four edges $e_{0}(v), e_{1}(v), e_{2}(v)$ and $e_{3}(v)$ pointing to the successors $v_{0}, v_{1}, v_{2}$ and $v_{3}$, respectively.
	Each edge $e$, including the root edge $e_r$, is labeled with a complex-valued weight $w(e)$.	
	The only terminal node is labeled with $1$ and has no successor.
	The index of terminal node is $n$. \qed
\end{definition}

In a QMDD, a node is at the $i$-th level, if its index is $i$.
A complete path in a QMDD is a path from the root edge to the terminal node. 
Throughout this paper, we require that the indices of internal nodes on all complete paths in the QMDD appear in an increasing order.
The size of a QMDD $G$, written $\size[G]$, is the number of its internal nodes.

We further give the semantics of QMDDs, that is, a function mapping QMDDs together with indices to matrices.
\begin{definition}\label{def:semantics}
	Let $(v, e)$ be a QMDD and $j$ an index where $j \leq \ind[v]$. 
	The matrix $M_{(v, e)}^j$ represented by $(v, e)$ with $j$ is defined as: 
	\begin{enumerate}
		\item If $v$ is the terminal node, then $M_{(v, e)}^j = [w(e)]_{s \times s}$		
		where $[w(e)]_{s \times s}$ denotes the matrix of size $s \times s$ where each entry is $w(e)$ and $s = \ind[v]- j$.
		
		\item If $v$ is an internal node, then		
		\begin{equation*}
			M_{(v, e)}^j = [w(e)]_{s \times s} \otimes \begin{bmatrix}
				M^{k}_{(v_0, e_0)} & M^{k}_{(v_1, e_1)} \\
				M^{k}_{(v_2, e_2)} & M^{k}_{(v_3, e_3)}
			\end{bmatrix}
		\end{equation*} 
		
		where $[w(e)]_{s \times s}$ denotes the matrix of size $s \times s$ where each entry is $w(e)$ and $s = \ind[v]- j$, $k = \ind[v] - 1$ and $M_{(v_i, e_i)}^k$ is the matrix represented by the QMDD $(v_i, e_i)$ with $k$. \qed
		
	\end{enumerate} 
	
\end{definition}

%


\begin{example}
	Two natural numbers 0 and 1 denote the two Boolean vectors $ (00)_{2} $ and $ (01)_{2} $ respectively in which the Boolean value of $ x_{0} $ is $ 0 $. Similarly, the Boolean value of $ x_{0} $ is $ 1 $ in the two natural numbers 2 and 3. 
	According to $x_{0}$, decomposing the matrix $ U $ in Figure \ref{fig:MatricesQMDD2}(a) leads to four submatrices $U_0 = \begin{bmatrix}
		a & b \\
		d & c
	\end{bmatrix}$, $U_1 = \begin{bmatrix}
		a & c \\
		d & c
	\end{bmatrix} $, $U_2 = \begin{bmatrix}
		a & b \\
		a & c 
	\end{bmatrix}$, and $U_3 = \begin{bmatrix}
		a & b \\
		d & b 
	\end{bmatrix}$.
	The submatrix $U_1$ contains all entries where the value of $x_0$ is $0$ in the row index $i$ and that of $x_0$ is $1$ in the column index $j$.
	The other submatrices $U_0$, $U_2$ and $U_3$ are similar.
	Finally, $U_1$ is recursively decomposed into four submatrices of size $1 \times 1$: $[a]$, $[c]$, $[d]$ and $[c]$.
	
	Figure \ref{fig:MatricesQMDD2}(d) shows the QMDD representing the matrix $U$.
	The weights of the root edge and four outgoing edges of $v$ are $1$.	
	According to the decomposition rule, we obtain the subgraph $(v_0, e_0)$ representing the submatrix $U_0$.
	The node $v_0$ has four outgoing edges $a$, $b$, $d$ and $c$ pointing to the terminal node, which denotes the entries each of whose value is the weight of the corresponding edge. \qed
\end{example}


To compactly represent unitary matrices, a QMDD should be compressed to another equivalent one according to a set of reduction rules.
We say two internal nodes $v$ and $v'$ are isomorphic if they are labeled with the same variable and have all corresponding edges pointing to the same nodes with the same weight.

\begin{itemize}
	\item[\textbf{RI}] Eliminate an internal node $v$ isomorphic to a distinct node $v'$, and redirect every incoming edge of $v$ to $v'$.
	\item[\textbf{RS}] Eliminate a node $v$ such that all outgoing edges point to the same node $v'$ and have the same weight $w'$, and redirect every incoming edge $e$ of $v$ to $v'$ and set the weight of $e$ to be $w \cdot w'$ where $w$ is the original weight of $e$.
\end{itemize}

A QMDD is \textit{reduced}, if none of the rules \textbf{RI} and \textbf{RS} can be applied in it.

Canonicity is a desired feature of representations of unitary matrices, \ie, any unitary matrix has a unique representation.
This feature is of particular importance.
On the one hand, a more compact representation can be obtained since there does not exist two distinct structures denoting the same submatrix.
On the other hand, equivalence checking, the commonly used query task, can be accomplished under canonical representations in a constant time.
To obtain the canonicity feature for QMDDs, it is necessary to normalize the weights of all outgoing edges of internal nodes.
A QMDD is \textit{normalized}, if for every internal node, the largest magnitude of all non-zero weights of the outgoing edges is $1$.
We remark that if two or more edges have weights of the largest magnitude, then we require only the weight of the leftmost edge to be $1$.
By imposing ordering, reduction and normalization properties, QMDDs become a canonical form of unitary matrices \cite{NieWM2016,ZulHR2019}.

\subsection{Linear Transformation}
\looseness=-1
For a subset $\varSetY$ of variables, its \textit{linear combination} is $\bigoplus_{x \in \varSetY} x$.
We say a set $\Pi$ of linear combinations is \textit{fully ranked}, if every variable $x \in \varSet$ can be represented by an exclusive-or of a subset of $\Pi$.
A \textit{linear transformation} is a mapping $\pi$ from $\set{0, \cdots, n-1}$ to exactly one element of a fully ranked set $\Pi$ of linear combinations.
We use $\order_i$ for the $i$-th element of $\order$.
The \textit{standard order} denotes the increasing order of variables $[x_0, x_1, \cdots, x_{n-1}]$. 
A linear transformation can be used to represent an automorphism $\order: \bool^n \rightarrow \bool^n$ where the $i$-th Boolean value of $\order(v)$ is the value of the $i$-th linear combination $\order_i$ under $v$ (\ie, $\order(v)_i = \order_i(v)$ for every $v \in \bool^n$).

\begin{example} \label{exm:linearTransOrder}
	Suppose that $\varSetX = \set{x_0, x_1, x_2}$.
	The variable order $\order^1 = [x_2, x_0, x_1]$ is a linear transformation since each variable is an element of the order.
	Because $x_0 \equiv (x_0 \oplus x_1) \oplus (x_1 \oplus x_2) \oplus x_2$ and $x_1 \equiv (x_1 \oplus x_2) \oplus x_2$, so $\order^2 = [x_0 \oplus x_1, x_1 \oplus x_2, x_2]$ is also a linear transformation.
	However, the order $\order^3 = [x_0 \oplus x_1, x_1 \oplus x_2, x_0 \oplus x_2]$ is not as any $x_i$ can not be represented by the linear combination of a subset of $x_0 \oplus x_1$, $x_1 \oplus x_2$, $x_0 \oplus x_2$.
	
	For a Boolean vector $v = (001)_2 = 1$, it means that $x_0 = 0$, $x_1 = 0$ and $x_2 = 1$ under the standard order $[x_0, x_1, x_2]$.
	It follows that $x_0 \oplus x_1 = 0$ and $x_1 \oplus x_2 = 1$.
	Hence, $\order^1$ maps $v$ to the different Boolean vector $\order^1(v) = (100)_2 = 4$.
	Similarly, $\order^2(v) = (011)_2 = 3$. \qed
\end{example}

\section{QMDDs with Linear Transformation}
In a traditional way, the unitary matrix of a quantum operation is constructed following the increasing variable order $[x_0, x_1, \cdots, \\ x_{n - 1}]$.
It is possible to obtain a more compact QMDD via different variable orders.
We illustrate this with the following example.

\begin{example} \label{exm:QubitsAndMatrix2}
	As Figure \ref{fig:MatricesQMDD2}(a) shows, $U$ is a unitary matrix which directly represents a quantum operation since it is based on the standard order $[x_0, x_1]$.
	The QMDD for $U$ is shown in Figure \ref{fig:MatricesQMDD2}(d).
	Since the four submatrices of $U$ based on $x_0$ is distinct, the QMDD for $U$ has size $5$.
	We now consider the variable order $\order: [x_1, x_0]$.
	Under the variable order $\order$, the natural number $1$ corresponds to the Boolean vector $(01)_2$, and indicates that the values of $x_1$ and of $x_0$ is $0$ and $1$, respectively.
	The natural number $1$ under $\order$ in fact corresponds to $2$ under the standard order $[x_0, x_1]$.
	Similarly, $0, 2$ and $3$ under $\order$ corresponds to $0, 1$ and $3$ under the standard order, respectively.	
	Under the new variable order, it is necessary to adjust the position of some entries of $U$.
	By first exchanging the $1$-st and $2$-nd rows of $U$, and then exchanging the $1$-st and $2$-nd columns, we obtain the matrix $U^{\order}$ based on $[x_1, x_0]$.
	Its submatrix $U^{\order}_0$ is $[a]_{2 \times 2}$, which can be represented by the terminal node and a edge with weight $a$.
	The QMDD for $U^{\order}$, shown in Figure \ref{fig:MatricesQMDD2}(e), has size $4$ smaller than $U$. \qed
\end{example}

Variable orders can be considered as an automorphism on $\bool^n$. 
According to an automorphism, the original matrix can be converted into another one by rearranging some entries' positions.
The new matrix represents the same quantum operation but may contain more identity submatrices, resulting in a smaller QMDD than before.


%

Linear transformations are a class of automorphisms on $\bool^n$ that express much more automorphisms than variable order but enjoy efficient representation \cite{MeiST2000}.
Inspired by the concept of linear transformations, in this section, we will introduce the integration of matrices with linear transformations, then propose a compact and canonical representation of quantum operations, namely Linearly Transformed Quantum Multipled-Decision Diagram (LTQMDD), and finally devise a minimization algorithm for LTQMDDs, which essentially finds a locally optimal linear transformation for LTQMDDs. 

\subsection{Incorporating Linear Transformation}
Traditionally, quantum states are represented by a normalized vector of size $2^{n}$ and quantum operations are represented by a unitary matrix of size $2^n \times 2^n$. 
In the following, we incorporate the vector-based representation of quantum states and the matrix-based representation of quantum operations with an additional linear transformation.
The basic idea is to rearrange the position of every entry in the vector and the matrix according to the linear transformation.

\begin{definition} \label{def:LTVector}
	Let $\order$ be a linear transformation over $\varSet$ and $v$ a vector of size $2^n$.
	The \textit{linearly transformed vector} of $v$ by $\order$, written $v^{\order}$, is defined as $v^{\order}_{i} = v_{\order(i)}$ for $0 \leq i < 2^n$. \qed
\end{definition}

\begin{definition} \label{def:LTMatrix}
	Let $\order$ be a linear transformation over $\varSet$ and $U$ a matrix of size $2^n \times 2^n$.
	The \textit{linearly transformed matrix} of $U$ by $\order$, written $U^{\order}$, is defined as 
	\[
	u^{\order}_{i, j} = u_{\order(i), \order(j)} \text{ for } 0 \leq i, j < 2^n
	\]
	where $u^{\order}_{i, j}$ is the entry of $U^{\order}$ with the $i$-st row and $j$-st column and $u_{\order(i), \order(j)}$ is the entry of $U$ with the $\order(i)$-st row and $\order(j)$-st column. \qed
\end{definition}


%
%

We remark that the vector and matrix are identical to the traditional ones when the linear transformation is the standard order.

\begin{example}
	Let $\order$ be a linear transformation $[x_0 \oplus x_1, x_1]$.
	It is easily verified that $\order(0) = 0$, $\order(1) = 3$, $\order(2) = 2$ and $\order(3) = 1$.
	The quantum state $\quanState[\varSet]$ is a vector $v  = [\alpha_{0} \ \alpha_{1} \ \alpha_{2} \ \alpha_{3}]^{\top}$.
	The linearly transformed vector of $v$ by $\order$ is $[\alpha_{0} \ \alpha_{3} \ \alpha_{2} \ \alpha_{1}]^{\top}$.
	
	Figure \ref{fig:MatricesQMDD2}(a) shows a quantum operation represented by the unitary matrix $U$ and Figure \ref{fig:MatricesQMDD2}(c) shows the linearly transformed matrix $U^{\order'} = [u^{\order'}_{i, j}]_{4 \times 4}$ of $U$ by $\order'$,  which denotes the same quantum operation as $U$.
	Since $\order'(1) = 3$ and $\order'(3) = 1$, the elements of $U^{\order'}$ with $1$-st row are as follows: $u^{\order'}_{10} = u_{30} = a$, $u^{\order'}_{11} = u_{33} = b$, $u^{\order'}_{12} = u_{32} = d$ and $u^{\order'}_{13} = u_{31} = c$. \qed
\end{example}

\subsection{Linearly Transformed QMDDs}
\looseness=-1
We say a QMDD $(v, e)$ together with a linear transformation $\order$ is called \textit{linearly transformed QMDD} (LTQMDD) $(v, e, \order)$.
A LTQMDD $(v, e, \order)$ respects the linear transformation $\order$.
In the following, we will define the semantics of LTQMDDs (\ie, the mapping from LTQMDDs to quantum operations) that serves as the theoretical foundation of LTQMDD-representations for quantum computing.
Remind that, we have given the semantics for QMDDs (cf. Definition \ref{def:semantics})
It is easy to extend the semantics for QMDDs to LTQMDDs via the concept of linearly transformed matrices (cf. Definition \ref{def:LTMatrix}).

\begin{definition} \label{def:semLTQMDD}
	Let $\order$ be a linear transformation over $\varSet$, $(v, e, \order)$ a LTQMDD and $j$ the index s.t. $j \leq \ind[v]$.
	The semantics for the LTQMDD $(v, e, \order)$ with $j$ is the linearly transformed matrix $(M^j_{(v, e)})^{\order}$ where $M^j_{(v, e)}$ denotes the matrix represented by $(v, e)$ with $j$. \qed 
\end{definition}



\begin{example} \label{exm:LTQMDD}
	Under the linear transformation $\order': [x_0 \oplus x_1, x_1]$, the linearly transformed matrix $U^{\order'}$ is illustrated in Figure \ref{fig:MatricesQMDD2}(c).
	The submatrix $U^{\order'}_0$ is $\begin{bmatrix}
		a & c \\
		a & b
	\end{bmatrix}$ and $U^{\order'}_1$ is $\begin{bmatrix}
		a & b \\
		d & c
	\end{bmatrix}$.
	Two submatrices $U^{\order'}_2$ and $U^{\order'}_3$ are identical to $U^{\order'}_1$.	
	As Figure \ref{fig:MatricesQMDD2}(f) shows, the corresponding LTQMDD for $U^{\order'}$ contains a root node $v$ with four outgoing edges to two successor nodes $v_0$ and $v_1$. 
	The first outgoing edge $e_0(v)$ together with $v_0$ denotes the submatrix $U^{\order'}_0$.
	The other three outgoing edges $e_1(v)$, $e_2(v)$ and $e_3(v)$ together with $v_1$ denote the submatrices $U^{\order'}_1$, $U^{\order'}_2$ and $U^{\order'}_3$, respectively.
	It is easily observed that the size of the LTQMDD for the linearly transformed matrix $U^{\order'}$ is smaller than the two QMDDs denoting the same quantum operation shown in Figure \ref{fig:MatricesQMDD2}(d) and \ref{fig:MatricesQMDD2}(e). \qed
\end{example}

The number of variable orders is $n!$.
In contrast, the number of linear transformations is $\prod_{i = 0}^{n - 1} (2^n - 2^i)$ \cite{MeiT2001}.
Linear transformation is able to convert a quantum matrix into one with more identical submatrices compared to variable orders, and hence enable us to gain a more compact representation of quantum functionality.
This was verified by Example \ref{exm:LTQMDD}.
It is easily verified that extending QMDDs with linear transformation does not affect the canonicity property.
\begin{theorem}
	For a linear transformation $\order$, any unitary matrix $U$ has a unique reduced and normalized LTQMDD respecting $\order$.
\end{theorem}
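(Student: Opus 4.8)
The plan is to reduce the canonicity of LTQMDDs entirely to the already-established canonicity of ordinary QMDDs, exploiting the fact that a linear transformation $\order$ acts as a fixed bijection on the index set. The key observation is that, by Definition \ref{def:semLTQMDD}, the semantics of an LTQMDD $(v, e, \order)$ is the linearly transformed matrix $(M^j_{(v,e)})^{\order}$, and by Definition \ref{def:LTMatrix} the map $M \mapsto M^{\order}$ sending $u_{i,j}$ to $u_{\order(i),\order(j)}$ is a bijection on the set of $2^n \times 2^n$ unitary matrices (it merely permutes the entries according to the automorphism $\order$, and its inverse is the transformation by $\order^{-1}$, which exists because $\Pi$ is fully ranked). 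Thus for a \emph{fixed} $\order$, every unitary $U$ arises as $U = (M)^{\order}$ for exactly one unitary matrix $M$, namely $M = U^{\order^{-1}}$.

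First I would make precise that reduction and normalization are properties of the underlying QMDD $(v,e)$ and do not involve $\order$ at all: an LTQMDD $(v,e,\order)$ is reduced/normalized iff $(v,e)$ is. Next I would invoke the canonicity theorem for QMDDs cited from \cite{NieWM2016,ZulHR2019}: every unitary matrix has a unique reduced, normalized QMDD under the standard ordering. Given the target unitary $U$ and the fixed $\order$, I set $M = U^{\order^{-1}}$, which is again unitary, and let $(v,e)$ be its unique reduced, normalized QMDD. Then $(v,e,\order)$ is a reduced, normalized LTQMDD whose semantics is $(M)^{\order} = (U^{\order^{-1}})^{\order} = U$, establishing existence.

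For uniqueness, suppose $(v,e,\order)$ and $(v',e',\order)$ are two reduced, normalized LTQMDDs respecting the same $\order$ and both representing $U$. By the semantics they satisfy $(M_{(v,e)})^{\order} = U = (M_{(v',e')})^{\order}$, and applying the inverse transformation $(\cdot)^{\order^{-1}}$ (which is well defined and injective on unitary matrices) gives $M_{(v,e)} = M_{(v',e')}$. Since both $(v,e)$ and $(v',e')$ are reduced and normalized QMDDs of the same matrix, QMDD canonicity forces them to be identical, hence the two LTQMDDs coincide. The whole argument is a transport-of-structure along the bijection $(\cdot)^{\order}$.

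The main obstacle I anticipate is not conceptual but bookkeeping: I must verify carefully that $(\cdot)^{\order}$ is genuinely a bijection and that its inverse is transformation by $\order^{-1}$, i.e. that $\order$ being fully ranked guarantees $\order$ is an automorphism of $\bool^n$ with a two-sided inverse that is itself a linear transformation. I would confirm that the composition of two linear transformations corresponds to composing the index bijections, so $((M)^{\order^{-1}})^{\order} = M$ follows from $\order \circ \order^{-1} = \mathrm{id}$ acting on indices. A secondary subtlety is confirming that $U^{\order^{-1}}$ is unitary whenever $U$ is; this holds because $(\cdot)^{\order}$ is a simultaneous row-and-column permutation (conjugation by a permutation matrix), which preserves unitarity. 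Once these routine facts are in hand, canonicity of LTQMDDs is immediate from canonicity of QMDDs.
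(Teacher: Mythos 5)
Your proposal is correct and takes essentially the same route as the paper: both reduce canonicity of LTQMDDs to the established canonicity of reduced, normalized QMDDs (Theorem 2 of \cite{NieWM2016}) by transporting the unitary matrix along the entry-permutation induced by the fixed linear transformation $\order$. The only difference is that you carry out the bookkeeping explicitly---identifying the underlying QMDD as the one for $U^{\order^{-1}}$ and verifying that $(\cdot)^{\order}$ is a unitarity-preserving bijection---whereas the paper compresses the whole argument into the single claim that the QMDD-representation of $U^{\order}$ coincides with the LTQMDD-representation of $U$, leaving the direction of the transformation implicit.
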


	\begin{proof}
		By Definition \ref{def:semLTQMDD}, the QMDD-representation of $U^\order$ is identical to the LTQMDD-representation of $U$.
		By Theorem 2 in \cite{NieWM2016}, the matrix $U^\order$ has a unique reduced and normalized QMDD.
		So the matrix $U$ has a unique reduced and normalized LTQMDD respecting $\order$.
%
	\end{proof}

In addition, a quantum state can be transformed via not only one quantum operation but also a sequence of quantum operations. 
In order to manipulate the combination of quantum operations, we will use three operations: addition ($+$), multiplication ($\cdot$) and Kronecker product ($\otimes$) of two unitary matrices.
\citet{NieWM2016} designed three corresponding algorithms for the above operations over QMDDs.
These algorithms can be directly applied in LTQMDDs without any modification.
We do not present these algorithms, for details, please refer to \cite{NieWM2016}.

%

\begin{figure*}[t]
	\subfigure[The original matrix $U$]{
		\begin{minipage}[t]{0.225\linewidth}
			\centering
			$\begin{bNiceArray}{cc:cc}[first-row,last-row=5,first-col,last-col,nullify-dots]
				& 0 & 1 & 2 & 3 & \\
				0 & M_{00} & M_{01} & M_{02} & M_{03} & \\
				1 & M_{10} & M_{11} & M_{12} & M_{13} & \\
				\hdottedline
				2 & M_{20} & M_{21} & M_{22} & M_{23} & \\
				3 & M_{30} & M_{31} & M_{32} & M_{33} & \\
				&  &  & &  &
			\end{bNiceArray}$
		\end{minipage}%
	}%
	\hspace*{1mm}
	\subfigure[The matrix $U'$ with the order {$\order'\hspace*{-2mm}: \hspace*{3mm} [\order_0, \cdots, \order_{i + 1}, \order_i, \cdots, \order_{n - 1}]$}]{
		\begin{minipage}[t]{0.225\linewidth}
			\centering
			$\begin{bNiceArray}{cc:cc}[first-row,last-row=5,first-col,last-col,nullify-dots]
				& 0 & 1 & 2 & 3 & \\
				0 & M_{00} & M_{02} & M_{01} & M_{03} & \\
				1 & M_{20} & M_{22} & M_{21} & M_{23} & \\
				\hdottedline
				2 & M_{10} & M_{12} & M_{11} & M_{13} & \\
				3 & M_{30} & M_{32} & M_{31} & M_{33} & \\
				&  &  & &  &
			\end{bNiceArray}$
		\end{minipage}%
	}%
	\hspace*{1mm}
	\subfigure[The matrix $U''$ with the order {$\order''\hspace*{-2mm}: \hspace*{3mm} [\order_0, \cdots, \order_i \oplus \order_{i + 1}, \order_i, \cdots, \order_{n - 1}]$}]{
		\begin{minipage}[t]{0.225\linewidth}
			\centering
			$\begin{bNiceArray}{cc:cc}[first-row,last-row=5,first-col,last-col,nullify-dots]
				& 0 & 1 & 2 & 3 & \\
				0 & M_{00} & M_{02} & M_{03} & M_{01}& \\
				1 & M_{20} & M_{22} & M_{23} & M_{21} & \\
				\hdottedline
				2 & M_{30} & M_{32} & M_{33} & M_{31}& \\
				3 & M_{10} & M_{12} & M_{13} & M_{11} & \\
				&  &  & &  &
			\end{bNiceArray}$
		\end{minipage}%
	}%
	\hspace*{1mm}
	\subfigure[The matrix $U^*$ with the order {$\order^*\hspace*{-2mm}: \hspace*{3mm} [\order_0, \cdots, \order_{i + 1}, \order_i \oplus \order_{i + 1}, \cdots, \order_{n - 1}]$}]{
		\begin{minipage}[t]{0.225\linewidth}
			\centering
			$\begin{bNiceArray}{cc:cc}[first-row,last-row=5,first-col,last-col,nullify-dots]
				& 0 & 1 & 2 & 3 & \\
				0 & M_{00} & M_{03} & M_{01} & M_{02}& \\
				1 & M_{30} & M_{33} & M_{31} & M_{32} & \\
				\hdottedline
				2 & M_{10} & M_{13} & M_{11} & M_{12}& \\
				3 & M_{20} & M_{23} & M_{21} & M_{22} & \\
				&  &  & &  &
			\end{bNiceArray}$
		\end{minipage}%
	}%
\end{figure*}
\begin{figure*}[htbp]
	\subfigure[The QMDD for the matrix $U$]{
		\begin{minipage}[t]{0.5\linewidth}
			\centering
			\includegraphics[scale=0.4]{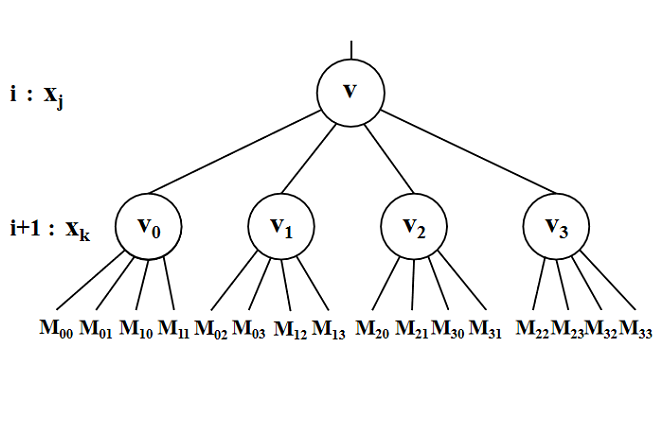}
		\end{minipage}%
	}%
	\subfigure[The LTQMDD for the matrix $U'$ ]{
		\begin{minipage}[t]{0.5\linewidth}
			\centering
			\includegraphics[scale=0.4]{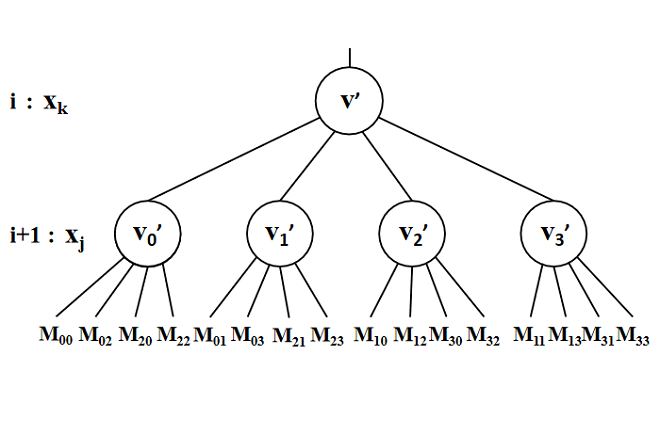}
		\end{minipage}%
	}%
	
	\subfigure[The LTQMDD for the matrix $U''$]{
		\begin{minipage}[t]{0.5\linewidth}
			\centering
			\includegraphics[scale=0.4]{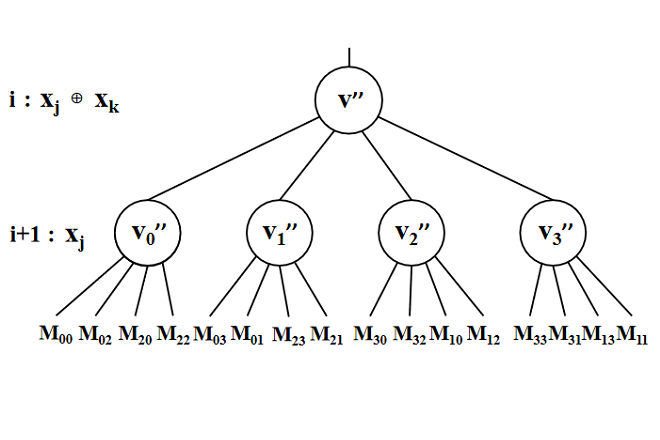}
		\end{minipage}%
	}%
	\subfigure[The LTQMDD for the matrix $U^*$ ]{
		\begin{minipage}[t]{0.5\linewidth}
			\centering
			\includegraphics[scale=0.4]{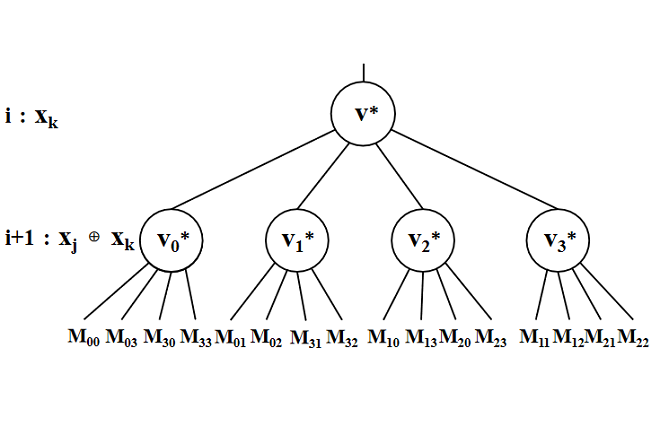}
		\end{minipage}%
	}%
	\caption{The three level exchange procedures}
	\label{fig:MatricesQMDD3}
\end{figure*}

\subsection{Linear Sifting of QMDDs}
Choosing an appropriate linear transformation plays a crucial role in reducing the size of QMDDs.
To identify a good linear transformation for a unitary matrix, we propose linear sifting algorithm for QMDDs.

Firstly, we introduce three level exchange procedures that are essential components of linear sifting algorithm.
Assume that $\order$ is a linear transformation with the $i$-th element $\order_i$ and the $(i + 1)$-th element $\order_{i + 1}$.
The standard level exchange procedure for levels $i$ and $i + 1$, proposed by \citet{NieWM2016}, obtains a QMDD with the order $\order': [\order_0, \cdots, \order_{i + 1}, \order_i, \cdots, \order_{n - 1}]$.
For each node $v$ with index $i$, we use $v'$ to denote the node corresponding to $v$ in the new LTQMDD.
For each $0 \leq l, m \leq 3$, we let the edge $e_l(v'_m)$ point to the node $v_{\order'(m)}$ and let the weight $w(e_l(v'_m))$ to be the weight $w(e_{\order'(l)}(v_{\order'(m)}))$.
Secondly, the upper and lower level exchange procedures gain LTQMDDs with the order $\order'': [\order_0, \cdots, \order_i \oplus \order_{i + 1}, \order_i, \cdots, \order_{n - 1}]$ and $\order^{*}: [\order_0, \cdots, \order_{i + 1}, \order_i \oplus \order_{i + 1}, \cdots, \order_{n - 1}]$, respectively.
The processes of them are the same as the standard level exchange procedure except that we use $\order''$ and $\order^{*}$ instead of $\order'$.

\looseness=-1
Figure \ref{fig:MatricesQMDD3} depicts the three level exchange procedures in terms of matrices.
Due to the space limit, we only illustrate the upper level exchange procedure here.
Taking the order $\order'': [\order_0, \cdots, \order_i \oplus \order_{i + 1}, \order_i, \cdots, \order_{n - 1}]$ as an example.
Let $\order_i = x_j$ and $\order_{i + 1} = x_k$.
It follows that $\order_i \oplus \order_{i + 1} = x_j \oplus x_k$.
As shown in Figure \ref{fig:MatricesQMDD3}(a), the original matrix $U$ is decomposed into $16$ submatrices w.r.t. $[x_j, x_k]$.
According to $[x_j \oplus x_k, x_k]$, the matrix becomes $U''$ shown in Figure \ref{fig:MatricesQMDD3}(c).
Figure \ref{fig:MatricesQMDD3}(g) depicts the LTQMDD with the root node $v''$ representing $U''$.
The matrix $U''$ is obtained from the original matrix $U$ by first exchanging $1$-st row and $2$-nd row, then exchanging $1$-st row and $3$-rd row, and finally exchange the corresponding columns as we do on rows.
For example, $M_{33}$ is the entry of $U''$ with the $1$-st row and the $1$-st column.
The $l$-th outgoing edge from the $m$-th successor node of $v''$ points to an internal node that represents the submatrix of $U''$ with the $m$-th row and the $l$-th column.
Notably, the upper level exchange procedure can be reversed by the lower one, and vice versa. 
The inverse operation of standard exchange procedure is itself.

\looseness=-1
The new LTQMDD generated by the level exchange procedures may violate the reducing or normalization property.
It is necessary to obtain a smaller LTQMDD via the reduction and normalization rules after each level exchange procedure.
The efficiency of level exchange procedures is due to its local operations on the nodes of the two involved levels.
However, the normalization rule requires adjusting the whole LTQMDD in the worst case.
To fix this defect, we adopt the approach proposed by \citet{NieWM2016}.
The basic idea is to save the change of weights in the nodes instead of propagating them to incoming edges.
For details, please refer to \cite{NieWM2016}.

\looseness=-1
Armed with the three level exchange procedures, we are ready to introduce linear sifting algorithm.
It finds the locally optimal position of each level but also the locally optimal linear combination.
It sorts levels into a decreasing sequence according to the number of nodes at each level.
For each level $i$ following the decreasing sequence, the locally optimal position and linear combination of the chosen level $i$ will be determined in the following way.
The movement of the chosen level consists of two phases.
The direction of movement of the two phases is determined based on its initial position. 
Here, we only consider the case where level $i$ is close to the bottom level \ie, $i > \frac{n}{2}$.
The process of the case where level $i$ is close to the top level is similar.
In the 1st phase, level $i$ is first moved to the bottom level and then returns to the initial position with the initial linear combination.
In the 2nd phase, level $i$ will be moved towards to the top level and finally achieves the locally optimal position with the corresponding linear combination.
To recover the position and linear combination, we maintain four sequences of level exchange procedures $\tau^1_{init}$, $\tau^1_{opt}$, $\tau^2_{init}$ and $\tau^2_{opt}$.
We also use $s^1_{opt}$ and $s^2_{opt}$ for the minimal size of LTQMDDs discovered by the 1st and 2nd phases, respectively.
Each element of the sequences contains the type of level exchange and the index of level which be interchanged by level $i$.
In each element, we use $s$, $u$ and $l$ for the standard, upper and lower level exchange, respectively.
The sequence $\tau^1_{init}$ records the sequence of level exchange procedures from the initial position to the bottom for the 1st phase while $\tau^1_{opt}$ is used to achieve the optimal position discovered by the 1st phase.
The sequences $\tau^2_{init}$ and $\tau^2_{opt}$ are similar but for the 2nd phase.

\looseness=-1
In the 1st phase, level $i$ is moved to the bottom level by level via the standard or upper level exchange procedure and record two sequences of level exchange procedures $\tau^1_{init}$ and $\tau^1_{opt}$.
Suppose the current linear transformation is $\order: [\cdots, \order_i, \order_{j}, \cdots]$.
The steps of each move work as follows:
\begin{enumerate}
	\item 
	Assume that the standard level exchange procedure is used for the selected $i$ and $j$.
	The corresponding linear transformation becomes $\order': [\cdots, \order_{j}, \order_i, \cdots]$.
	Let $s'$ be the size of the new LTQMDD.
	
	\item 
	Assume that the upper level exchange procedure is used for the selected $i$ and $j$.
	The corresponding linear transformation becomes $\order'': [\cdots, \order_i \oplus \order_{j}, \order_i, \cdots]$.
	Let $s''$ be the size of the new LTQMDD.
	
	\item 
	If $s' > s''$, then we chose the LTQMDD with the order $\order''$ and add $(u, j)$ to $\tau^1_{init}$; otherwise, we chose the LTQMDD with the order $\order'$ and add $(s, j)$ to $\tau^1_{init}$.
	
	\item If $\min\set{s', s''} < s^1_{opt}$, then $s^1_{opt} = \min\set{s', s''}$ and let $\tau^1_{opt}$ be $\tau^1_{init}$.
\end{enumerate}
During this moving process, the linear combination of variables may be introduced since the $i$-th element of $\order$ will be $\order_i \oplus \order_j$ when the upper level exchange procedure generates a better LTQMDD.
Let $\tau'_{init}$ be the reverse sequence of $\tau^1_{init}$.
Level $i$ returns to its initial position with the initial linear combination via performing the inverse operation of each level exchange procedure of $\tau'_{init}$ one by one.

\looseness=-1
Now it turns to the 2nd phase.
This phase is similar to the 1st one which moves level $i$ to the bottom except for the following:
Firstly, we choose its predecessor level $k$ rather than its successor level $j$ when executing the level exchange procedure in each move.
Secondly, we record the two sequences of level exchange procedures as $\tau^2_{init}$ and $\tau^2_{opt}$, and the minimal size of LTMQDDs as $s^2_{opt}$.
Finally, we move level $i$ to its locally optimal position with the initial linear combination.
It firstly moves to the initial position with the initial linear combination.
If $s^1_{opt} < s^2_{opt}$, then the locally optimal position and linear combination are found in 1st phase.
In this case, we execute each level exchange procedure of $\tau^1_{opt}$ one by one.
Otherwise, the sequence we perform is $\tau^2_{opt}$.

\section{Experimental Results}
\looseness=-1
We have implemented the linear sifting based on the publicly-available JKQ-framework \cite{WilR2020} which includes the state-of-the-art QMDD package \cite{ZulHR2019} and compilation method \cite{BurL2021}. 
We use a Boolean matrix with size $n \times n$ to store the linear transformation $\order$.
The variable $x_j$ is in the linear combination $\order_i$ iff the entry of the Boolean matrix with the $i$-th row and $j$-th column is $1$.

\looseness=-1
We use benchmark circuits from Qiskit \cite{Anis2021}, QASMBench \cite{LiA2021}, Feynman \cite{Amy2019} and GRCS (Google Random Circuit Sampling Benchmarks) \cite{BoiIS2018}.
Since GRCS is too large to complete the compilation process, we choose parts of the circuit as test cases. 
The name "GRCS\_i\_j\_k" denotes the $k$-th circuit of $i$ qubits with the first $j$ operations.
We firstly compile each test case into a QMDD with the standard variable order.
Then, we apply the (linear) sifting algorithm in the compiled QMDD until it converges.
Furthermore, we consider only benchmarks from these comparisons if (1) the QMDD can be compiled using the standard order within $1$ hour time limit and $10$GB memory limit, and (2) the resultant QMDD generated by the standard variable order has the size of more than $150$.
Finally, there are $63$ test cases that meet the above conditions, with results presented in Table \ref{tab:exp}.
The machine running the benchmark is equipped with Intel Xeon Gold 6248R 3.00GHz CPU and 128GB memory.

\looseness=-1
The comparison between original sifting and linear sifting algorithms are shown in Table \ref{tab:exp}.
The columns "Qubits" and "Gates" represent the number of qubits and the number of gates, respectively.
The column "Standard" denotes the size of complied QMDDs with the standard order.
The columns "Sifting" and "Linear sifting" denote the results of the corresponding algorithms respectively and contain two subcolumns where "Size" denotes the number of nodes of the compiled QMDDs (LTQMDDs) and "Time" is the total runtime in seconds.
The column "Ratio" indicates the size improvement of the linear sifting compiled LTQMDDs over the sifting QMDDs.

We can make several observations from Table \ref{tab:exp}.
Firstly, both sifting algorithms dramatically reduce the size of the initial QMDDs by an average of more than $80\%$.
In addition, linear sifting generates the LTQMDDs with total number of nodes $11\%$ smaller than the QMDDs that provided by sifting. 
Among the $63$ test cases mentioned above, $36$ test cases using linear sifting achieved smaller sizes and only $15$ test cases have larger sizes than sifting.
For the test cases: "csum\_mux\_9", "ham15-low", "ham15-med", "vqe\_uccsd\_n6" and "grcs\_20\_100\_2", the size of LTQMDDs generated from linear sifting are $70\%$, $42\%$, $57\%$, $50\%$ and $41\%$ smaller than that of sifting.
%
In particular, for "csum\_mux\_9", sifting provide the same QMDD as the initial one, but linear sifting obtains $70\%$ improvement on sizes over the initial QMDD.
Apart from the perspective of sizes, we can see that linear sifting algorithm is slower than original sifting algorithm in most instances.
The reason is as follow.
In each move, linear sifting calls at least one more level exchange procedure compared to original sifting.
If two sifting algorithms produce QMDDs of almost equal sizes, then linear sifting takes twice as long as sifting.
However, in most instances, linear sifting produces a LTQMDD smaller than sifting.
Therefore, linear sifting takes only $1.7$ times longer than original sifting on total time. 
In addition, the time complexity of the operations on QMDDs depends on its size.
Subsequent operations can benefit from the compact representation.
Hence, it is worthy of costing more time to generate more compact QMDDs.
\looseness=-1

\begin{center}
	\begin{table*}
		\small
		\centering
		\caption{Experimental results for sifting and linear sifting}
		\label{tab:exp}
		\begin{tabular}{|c|c|c|c|cc|cc|c|}
			\hline
			\vspace*{-0.6mm}
			\multirow{2}{*}{Circuit}          & \multirow{2}{*}{Qubits} & \multirow{2}{*}{Gates} & Standard  & \multicolumn{2}{c}{Sifting} \vline & \multicolumn{2}{c}{Linear Sifting} \vline & \multirow{2}{*}{Ratio} \\
											  &                         &                        & Size      & Size            & Time      & Size                  & Time       &                        \\ \hline
			bigadder\_n18                & 18 & 5    & 3128  & 78            & 7.77  & 78             & 4.56  & 1             \\
			csla\_mux\_3                 & 15 & 70   & 247   & 130           & 7.57  & 130            & 8.58  & 1             \\
			csum\_mux\_9                 & 30 & 140  & 941   & 941           & 10.49 & \textbf{285}   & 26.13 & \textbf{0.3}  \\
			gf2\textasciicircum{}4\_mult & 12 & 65   & 282   & 282           & 1.79  & 282            & 3.06  & 1             \\
			gf2\textasciicircum{}5\_mult & 15 & 97   & 1097  & 1069          & 6.64  & 1069           & 8.16  & 1             \\
			gf2\textasciicircum{}6\_mult & 18 & 135  & 4176  & 4176          & 5.18  & 4176           & 11.25 & 1             \\
			gf2\textasciicircum{}7\_mult & 21 & 179  & 16531 & 16531         & 38.26 & 16531          & 44.96 & 1             \\
			hwb8                         & 12 & 6446 & 2892  & \textbf{2521} & 8.16  & 2540           & 13.17 & 1.01          \\
			ham15-high                   & 20 & 1798 & 17512 & 11732         & 25.26 & \textbf{10092} & 56.43 & \textbf{0.86} \\
			ham15-low                    & 17 & 213  & 9612  & 3668          & 26.6  & \textbf{2116}  & 27.42 & \textbf{0.58} \\
			ham15-med                    & 17 & 452  & 10388 & 8292          & 17.81 & \textbf{3594}  & 42.32 & \textbf{0.43} \\			
			multiplier\_n25              & 25 & 203  & 1373  & 639           & 12.42 & \textbf{467}   & 13.9  & \textbf{0.73} \\
			qaoa\_n6                     & 6  & 270  & 886   & 766           & 1.7   & \textbf{634}   & 4.1   & \textbf{0.83} \\
			qcla\_adder\_10              & 36 & 181  & 369   & 175           & 15.74 & 175            & 23.9  & 1             \\
			qcla\_mod\_7                 & 26 & 294  & 1386  & \textbf{549}  & 13.26 & 641            & 22.49 & 1.17          \\
			qf21\_n15                    & 15 & 76   & 1035  & 1034          & 2.32  & \textbf{899}   & 5.76  & \textbf{0.87} \\
			tof\_10                      & 19 & 85   & 1545  & \textbf{55}   & 5.36  & 61             & 14.6  & 1.11          \\
			vqe\_uccsd\_n6               & 6  & 2282 & 1366  & 1366          & 1.85  & \textbf{684}   & 3.39  & \textbf{0.5}  \\ \hline
			gf2\textasciicircum{}8\_mult\_qc  & 24 & 115    & 217   & 60           & 9.79   & \textbf{59}   & 24.97  & \textbf{0.98} \\
			gf2\textasciicircum{}9\_mult\_qc  & 27 & 123    & 271   & 73           & 12.89  & \textbf{67}   & 21.26  & \textbf{0.92} \\
			gf2\textasciicircum{}10\_mult\_qc & 30 & 147    & 586   & \textbf{345} & 19.76  & 359           & 43.56  & 1.04          \\
			ham15-high\_qc                    & 20 & 1096   & 3743  & 2727         & 9.71   & \textbf{2551} & 31.24  & \textbf{0.94} \\
			ham15-med\_qc                     & 17 & 288    & 6700  & 2020         & 13.52  & \textbf{1508} & 34.91  & \textbf{0.75} \\
			hwb8\_qc                     & 12 & 4764 & 352   & \textbf{333}  & 3.46  & 335            & 20.72 & 1.01          \\
			hwb10\_qc                         & 16 & 23210  & 879   & 820          & 17.93  & \textbf{662}  & 40.34  & \textbf{0.81} \\
			hwb11\_qc                         & 15 & 63733  & 2783  & 2754         & 52.87  & 2742          & 71.91  & 1             \\
			hwb12\_qc                         & 20 & 122492 & 5629  & 5531         & 205.17 & \textbf{5486} & 255.73 & \textbf{0.99} \\
			mod\_adder\_1024\_qc              & 28 & 865    & 46770 & 177          & 18.05  & 177           & 44.98  & 1             \\
			qcla\_adder\_10\_qc               & 36 & 113    & 217   & \textbf{56}  & 17.12  & 64            & 39.65  & 1.14          \\
			qcla\_mod\_7\_qc                  & 26 & 176    & 196   & 51           & 5.2    & \textbf{49}   & 15.23  & \textbf{0.96} \\ \hline
			grcs\_16\_100\_0 & 16 & 100 & 8542   & 656           & 9.44    & \textbf{604}   & 11.75   & \textbf{0.92} \\
			grcs\_16\_100\_2 & 16 & 100 & 186098 & 19874         & 269.7   & \textbf{18850} & 1155.74 & \textbf{0.95} \\
			grcs\_16\_100\_4 & 16 & 100 & 6034   & 1198          & 13.31   & 1198           & 22.6    & 1             \\
			grcs\_16\_100\_6 & 16 & 100 & 23922  & 3032          & 47.88   & \textbf{2776}  & 97.89   & \textbf{0.92} \\
			grcs\_16\_100\_7 & 16 & 100 & 150298 & 17922         & 210.72  & \textbf{16930} & 543.7   & \textbf{0.94} \\
			grcs\_16\_100\_8 & 16 & 100 & 62962  & 6150          & 110.2   & \textbf{5702}  & 256.36  & \textbf{0.93} \\
			grcs\_20\_100\_0 & 20 & 100 & 2692   & 238           & 20.04   & \textbf{222}   & 35.79   & \textbf{0.93} \\
			grcs\_20\_100\_2 & 20 & 100 & 4956   & 2324          & 10.05   & \textbf{1364}  & 41.99   & \textbf{0.59} \\
			grcs\_20\_100\_7 & 20 & 100 & 1588   & 536           & 2.38    & \textbf{480}   & 8.01    & \textbf{0.9}  \\
			grcs\_20\_100\_8 & 20 & 100 & 9436   & 2376          & 29.03   & \textbf{2204}  & 74.75   & \textbf{0.93} \\
			grcs\_20\_110\_3 & 20 & 110 & 19398  & 2122          & 46.41   & \textbf{2106}  & 77.43   & \textbf{0.99} \\
			grcs\_20\_110\_5 & 20 & 110 & 19526  & 1818          & 26.19   & 1818           & 62.76   & 1             \\
			grcs\_20\_110\_7 & 20 & 110 & 6854   & 778           & 6.34    & 778            & 48.14   & 1             \\
			grcs\_20\_110\_8 & 20 & 110 & 69038  & 7418          & 83.66   & \textbf{7070}  & 194.31  & \textbf{0.95} \\
			grcs\_20\_110\_9 & 20 & 110 & 8262   & \textbf{1330} & 33.04   & 1378           & 58.48   & 1.04          \\
			grcs\_25\_100\_1 & 25 & 100 & 655    & \textbf{199}  & 6.28    & 203            & 12.18   & 1.02          \\
			grcs\_25\_100\_3 & 25 & 100 & 583    & 163           & 4.89    & \textbf{159}   & 10.41   & \textbf{0.98} \\
			grcs\_25\_100\_6 & 25 & 100 & 587    & 161           & 6.2     & 161            & 15.88   & 1             \\
			grcs\_25\_100\_7 & 25 & 100 & 583    & \textbf{163}  & 5.96    & 165            & 21.55   & 1.01          \\
			grcs\_25\_110\_0 & 25 & 110 & 1192   & 422           & 11.63   & 424            & 43.89   & 1             \\
			grcs\_25\_110\_1 & 25 & 110 & 1648   & \textbf{530}  & 22.44   & 554            & 51.03   & 1.05          \\
			grcs\_25\_110\_2 & 25 & 110 & 968    & \textbf{198}  & 14.69   & 206            & 29.61   & 1.04          \\
			grcs\_25\_110\_4 & 25 & 110 & 1648   & \textbf{304}  & 21.15   & 320            & 41.05   & 1.05          \\
			grcs\_25\_110\_6 & 25 & 110 & 1456   & \textbf{408}  & 19.89   & 420            & 40.32   & 1.03          \\
			grcs\_25\_110\_9 & 25 & 110 & 1520   & \textbf{300}  & 7.76    & 312            & 21.68   & 1.04          \\
			grcs\_25\_120\_2 & 25 & 120 & 1876   & 744           & 15.19   & \textbf{648}   & 31.64   & \textbf{0.87} \\
			grcs\_25\_120\_4 & 25 & 120 & 6988   & 1956          & 137.34  & \textbf{1933}  & 156.99  & \textbf{0.99} \\
			grcs\_25\_120\_5 & 25 & 120 & 1876   & 744           & 15.02   & \textbf{564}   & 26.48   & \textbf{0.76} \\
			grcs\_25\_120\_8 & 25 & 120 & 2844   & 1636          & 14.5    & \textbf{1508}  & 50.94   & \textbf{0.92} \\
			grcs\_25\_130\_0 & 25 & 130 & 24816  & 2232          & 1057.36 & \textbf{1490}  & 1224.3  & \textbf{0.67} \\
			grcs\_25\_130\_2 & 25 & 130 & 14148  & 1504          & 454.68  & \textbf{970}   & 481.11  & \textbf{0.64} \\
			grcs\_25\_130\_5 & 25 & 130 & 14148  & 1520          & 598.01  & \textbf{984}   & 728.27  & \textbf{0.65} \\ \hline
			Total            &    &     & 800251 & 149907 & 3925.01 & \textbf{133014} & 6655.69 & \textbf{0.89} \\ \hline
			\end{tabular}
	\end{table*}
	\end{center} 

\section{Conclusions}
In this paper, we integrate linear transformations into a recently proposed form of quantum computing: QMDDs.
We firstly show how linear transformations rearrange the entry of the original unitary matrix.
Then, we propose a compact and canonical representation of quantum computing: linearly transformed QMDDs (LTQMDDs).
Additionally, we design the linear sifting algorithm for LTQMDDs, obtaining more compact LTQMDDs.
Our experimental results justify that LTQMDDs are more compact than QMDDs.

\bibliographystyle{ACM-Reference-Format}
\bibliography{ICCAD-2022}

\end{document}
\endinput
